\title{Boundary-type Sets of Strong Product of Directed Graphs}
\titlerunning{Boundary-type Sets\ldots}  
\author{Prasanth G. Narasimha-Shenoi\inst{1} \and
Bijo S Anand\inst{2} \and
Mary Shalet T J\inst{1}}
\authorrunning{P.G. Narasimha-Shenoi et al.} 
\institute{Department of Mathematics,
	Government College Chittur,\\
	Palakkad, India - 678104\\
	\email{prasanthgns@gmail.com, mary\_shallet@yahoo.co.in}
\and
Sree Narayana College, Punalur, Kollam, Kerala, India
,\\
\email{bijos\_anand@yahoo.com}
}
\date{}
\begin{document}

\maketitle              

\begin{abstract}
Let $D=(V,E)$ be a strongly  connected digraph and let $u ,v\in  V(D)$.  The maximum distance $md (u,v)$ is defined as\\ $md(u,v)$=max\{$\overrightarrow{d}(u,v), \overrightarrow{d}(v,u)$\} where $\overrightarrow{d}(u,v)$ denote the length of a shortest directed $u-v$ path in $D$.  This is a metric.  The boundary, contour, eccentric and peripheral sets of a strong digraph $D$ with respect to this metric have been defined, and the above said metrically defined sets of a large strong digraph $D$ have been investigated in terms of the factors in its prime factor decomposition with respect to Cartesian product.  In this paper we investigate about the above boundary-type sets of a strong digraph $D$ in terms of the factors in its prime factor decomposition with respect to strong product. 
\keywords{Boundary-type sets, Maximum distance, Strong product.}
\end{abstract}

\section{Introduction}
Directed graphs or in short digraphs have immense applications in almost all areas of science and even in  sociology.  Nowadays, one-way networks are introduced almost everywhere with an objective of increasing the efficiency of the network. A directed network is a network in which each edge has a direction, pointing from one vertex to another.  They can be represented as directed graphs.  Dealing with directed networks is more complicated than dealing with two-way networks.\\
  Road traffic networks are the most frequently met examples of one-way network in day to day life.  Almost all main road networks are now kept as one-way.  The reason for this is the decreased accident rate and the ease of driving in one-way roads.  But when one-way traffic is introduced in a two-way network, there arises the difficulty of increased distance between places in one of the directions.  So here the problem of designing the network so as to minimize the distance between places as well as decreasing the cost of construction comes to play.\\
     The one-way problem was first studied by Robbins \cite{robbins1939theorem}. They have applications in a variety of fields like computer science and biolgy.  In the case of internet, the structure of the network will affect how efficiently it accomplishes its function of transmitting data.  If we know the network structure we can address many questions of practical relevance.  For example, if we have the opportunity to add new capacity to the network, then we can determine where should it be added.  Molecular biologists use networks to represent the patterns of chemical reactions among chemicals in the cell, while neuroscientists use them to represent patterns of connections between brain cells.  The other networks that we come across in our daily life are telephone networks, the power grid and email networks \cite{net}.\\
   The boundary-type sets of a graph, the \textit{boundary, contour, eccentricity} and \textit{periphery sets} of a graph were studied in \cite{caceres2005rebuilding} and \cite{chartrand2003boundary}.  They  constitute the borders of a graph.  All other vertices of the graph lie between them.  The efficiency of a network is defined in terms of how easy it is for pairs of nodes to communicate with each other.  So it is apt to say that the boundary-type vertices determine the efficiency of a network.\\
 It is very difficult to  identify the various boundary-type sets in large networks.  So naturally we try to decompose the network into smaller networks and identify the boundary-type sets.  The four standard graph products are the Cartesian, the direct, the strong and the lexicographic product.  They can be extended to digraphs as well.  Thus our strategy is to apply the technique of \textit{Divide and Rule} in order to identify the boundary-type sets of large networks.\\
 Feigenbaum showed that the Cartesian product of digraphs satisfies the unique prime factorization property and provided a polynomial time algorithm for its computation \cite{feigenbaum1986directed}.  This was improved to a linear time approach by Crespelle et al. \cite{crespelle}.  Marc Hellmuth and Tilen Marc developed a polynomial time algorithm for determining the prime factor decomposition of digraphs with respect to the strong product \cite{hellmuth}.\\ 
 The usual directed distance in digraphs is not a metric.  For any two vertices $u$ and $v$ in a strong digraph $D$, the directed distance $\overrightarrow{d}(u,v)$ is usually not the same as the directed distance $\overrightarrow{d}(v,u)$.   As we are concerned with the problem of designing the network so as to minimize the distance between places at a minimum cost, we consider the distance \textit{maximum distance} or in short \textit{m-distance} which is a metric that was introduced by Chartrand and Tian in \cite{chartrand1997distance}.  It gives the maximum of the directed distances in either direction and is denoted by $md(u,v)$.  So minimizing $md(u,v)$ actually results in minimizing the distance between the nodes in both directions.  We can see that the metric \textbf{md} is a generalisation of the usual distance metric in undirected graphs.  The \textit{m-eccentricity} $me(v)$ of a vertex $v$ of $D$ is defined as $me(v)=\max_{u \in V(D)}\{md(v,u)\}$, the \textit{m-radius} of digraph $D$ is $m rad (D)=\min_{v \in V(D)}\{me(v)\}$, and the \textit{m-diameter} is $m diam (D)=\max_{v \in V(D)}\{me(v)\}$.\\
 The results concerning the boundary-type sets of a digraph in terms of its factors in the prime factor decomposition with respect to Cartesian product was presented in \cite{cartesian}.  In this paper, a similar study is conducted for strong product.
\section{Preliminaries}
A \textit{directed graph} or a \textit{digraph} $D$ consists of a non-empty finite set $V(D)$ of elements called vertices and a finite set $E(D)$ of ordered pairs of distinct vertices called arcs or edges \cite{bang2008digraphs}.   We call $V(D)$ the vertex set and $E(D)$ the edge set of $D$.  We write $D=(V, E)$ to denote the digraph $D$ with vertex set $V$ and edge set $E$.  For an edge $(u,v)$, the first vertex $u$ of the ordered pair is the tail of the edge and the second vertex $v$ is the head; together they are the endpoints.  This definition of a digraph does not allow loops (edges whose head and tail coincide) or parallel edges (pairs of edges with same tail and same head).\\
A \textit{directed path} is a directed graph $P\neq \emptyset$  with distinct vertices $u_0,\ldots , u_k$ and edges $e_0, \ldots , e_{k-1}$ such that $e_i$ is an edge directed from $u_i$ to $u_{i+1}$, for all $i<k$.  In this paper, a path will always mean a `directed path'.  A digraph is \textit{strongly connected} or\textit{ strong} if, for each ordered pair $(u, v)$ of vertices, there is a path from $u$ to $v$.  A digraph is \textit{weakly connected} if its underlying graph is connected.

The \textit{length} of a path is the number of edges in the path.  Let $u$ and $v$ be vertices of a strongly connected digraph $D$.  A shortest directed $u-v$ path is also called a directed $u-v$ \textit{geodesic}.  The number of edges in a  directed $u-v$ geodesic is called the directed distance $ \overrightarrow{d}(u,v)$.   But this distance is not a metric because $ \overrightarrow{d}(u,v)\neq \overrightarrow{d}(v,u)$.  So in \cite{chartrand1997distance}, Chartrand and Tian introduced two other distances in a strong digraph, namely the maximum distance $md(u,v)=max\{\overrightarrow{d}(u,v),\overrightarrow{d}(v,u)\}$ and the sum distance $sd(u,v)=\overrightarrow{d}(u,v)+\overrightarrow{d}(v,u)$, both of which are metrics.  In this paper, we deal with the first metric, the maximum distance, $md$.\\ 
Hereafter, we denote  $md(u,v)$ by $d(u,v)$, $me(v)$ by $ecc(v)$, $mrad(D)$ by $rad(D)$ and $mdiam(D)$ by $diam(D)$.  Also we consider only strong digraphs so that the distance between vertices and eccentricity of a vertex are always defined.

The concept of neighbourhood in a digraph $D$ is as follows \cite{bang2008digraphs}.\\
$N_D^+(v)=\{u \in V-v:vu \in E\}$, $N_D^-(v)=\{w \in V-v:wv \in E\}$.  The sets $N_D^+(v)$, $N_D^-(v)$ and $N_D(v)=N_D^+(v) \bigcup N_D^-(v)$ are called the out-neighbourhood, in-neighbourhood and neighbourhood of $v$.
   The closed neighbourhood (neighbours including $v$) of $v$ is denoted by  $N[v]$.  
\subsection{Definitions of boundary-type sets}
We define the boundary-type sets of a digraph $D$ with respect to the metric \textbf{maximum distance}.   
Most of the following definitions are analogous to the definitions in  \cite{chartrand2003boundary}.  Let $D$ be a strong digraph and $u, v \in V(D)$.  The vertex $v$ is said to be a \textit{boundary vertex} of $u$ if no neighbour of $v$ is further away from $u$ than $v$.  A vertex $v$ is called a \textit{boundary vertex} of $D$ if it is the boundary vertex of some vertex $u \in V(D)$. 
\begin{definition}
	The \textit{boundary} $\partial(D)$ of $D$ is the set of all of its boundary vertices; that is  $\partial(D) = \{v \in V | \exists u \in V, \forall w \in N(v) : d(u,w)\leq d(u, v)\}$. 
\end{definition}
The \textit{eccentricity} of a vertex $u \in V(D)$ is defined as $ecc_D(u) = max\{d(u, v)| v \in V(D)\}$.  If the digraph $D$ is clear from the context, we denote it as $ecc(u)$.  Given $u, v \in V(D)$, the vertex $v$ is called an \textit{eccentric vertex} of $u$ if no vertex in $V(D)$ is further away from $u$ than $v$.  This means that $d(u, v) = ecc(u)$.  A vertex $v$ is called an \textit{eccentric vertex} of digraph $D$ if it is the eccentric vertex of some vertex $u \in V(D)$. 
\begin{definition}
	The \textit{eccentricity} $Ecc(D)$ of a digraph $D$ is the set of all of its eccentric vertices;  
	$Ecc(D) = \{v \in V(D) | \exists u \in V(D), ecc(u) = d(u, v)\}$.
\end{definition}
A vertex $v \in V(D)$ is called a \textit{peripheral vertex} of digraph $D$ if no vertex in $V(D)$ has an eccentricity greater than $ecc(v)$,  that is, if the
eccentricity of $v$ is exactly equal to the diameter $diam (D)$ of $D$. 
\begin{definition}
	The \textit{periphery} $Per(D)$ of a digraph $D$ is the set of all of its peripheral vertices;  
	$Per(D) = \{v \in V(D) | ecc(u)\leq ecc(v), \forall u \in V(D)\}$. \\
	That is, $Per(D)= \{v \in V(D) | ecc(v) =
	diam(D)\}$.  
\end{definition}
A vertex $v \in V(D)$ is called a \textit{contour vertex} of digraph $D$ if no neighbour vertex of $v$ has an eccentricity greater than $ecc(v)$. The following
definition is from \cite{caceres2005rebuilding}.
\begin{definition}
	The \textit{contour} $Ct(D)$ of a digraph $D$ is the set of all of its contour vertices; 
	$Ct(D) = \{v \in V(D) | ecc(u)\leq ecc(v), \forall u \in N(v)\}$.
\end{definition} 
Then it is obvious from the definitions that as in the case of undirected graphs as in \cite{caceres2006geodetic} we have, 
\begin{enumerate}
	\item
	$Per(D) \subseteq Ct(D)\cap Ecc(D)$.
	\item
	$ Ecc(D) \cup Ct(D) \subseteq \partial(D)$.
\end{enumerate}
In the above definitions, $N(v)$ can also be replaced by $N[v]$.
\section{Strong Product of Directed Graphs}
The strong product 
$D_1 \boxtimes D_2$ of digraphs $D_1$ and $D_2$ is the digraph having vertex set
$V(D_1) \times V(D_2)$ and with arc set defined as follows.  A vertex $(u_i,v_r)$ is adjacent to $(u_j,v_s)$ in $D_1 \boxtimes D_2$ if either 
\begin{enumerate}
	\item
	$(u_i,u_j) \in E(D_1)$, $v_r=v_s$, or
	\item
	$u_i=u_j$, $(v_r,v_s) \in E(D_2)$, or
	\item
	$(u_i,u_j) \in E(D_1)$, $(v_r,v_s) \in E(D_2).$
\end{enumerate}
The strong product of digraphs is commutative \cite{hammack18}.   The distance between two vertices $(g,h)$ and $(g',h')$ in the strong product $G \boxtimes H$ of two graphs $G$ and $H$ is given in \cite{handbook}.  The relationship between 
 distance in the strong product and distances in its factor graphs is as follows:
$d_{G \boxtimes H}((g,h),g',h'))=\max\{d_G(g,g'),d_H(h,h')\}$.  So for the  strong product $D_1 \boxtimes D_2$ of digraphs $D_1$ and $D_2$, the directed distance $\overrightarrow{d}_{D_1 \boxtimes D_2}((u_i,v_r),(u_j,v_s))=\max\{\overrightarrow{d}_{D_1}(u_i,u_j),\overrightarrow{d}{D_2}(v_r,v_s)\}$.
Thus if $d$ is the metric \textit{maximum distance}, we get
\begin{align*}
d_{D_1 \boxtimes D_2}((u_i,v_r),(u_j,v_s))&=\max\{\overrightarrow{d}_{D_1 \boxtimes D_2}((u_i,v_r),(u_j,v_s)),\overrightarrow{d}_{D_1 \boxtimes D_2}((u_j,v_s),(u_i,v_r))\}\\
&=\max\{\max\{\overrightarrow{d}_{D_1}(u_i,u_j),\overrightarrow{d}_{D_2}(v_r,v_s)\},\max\{\overrightarrow{d}_{D_1}(u_j,u_i),\overrightarrow{d}_{D_2}(v_s,v_r)\}\}\\
&=\max\{\max\{\overrightarrow{d}_{D_1}(u_i,u_j),\overrightarrow{d}_{D_2}(v_r,v_s),\overrightarrow{d}_{D_1}(u_j,u_i),\overrightarrow{d}_{D_2}(v_s,v_r)\}\\     
&=\max\{\max\{\overrightarrow{d}_{D_1}(u_i,u_j),\overrightarrow{d}_{D_1}(u_j,u_i)\},\max\{\overrightarrow{d}_{D_2}(v_r,v_s),\overrightarrow{d}_{D_2}(v_s,v_r)\}\}\\
&=\max\{d_{D_1}(u_i,u_j),d_{D_2}(v_r,v_s)\}\\
\text{Hence it follows that}\\
ecc_{D_1 \boxtimes D_2}(u_i,v_r)&=\max\{d_{D_1 \boxtimes D_2}((u_i,v_r),(u_j,v_s)):(u_j,v_s) \in V(D_1 \boxtimes D_2)\}\\
&=\max\{\max\{d_{D_1}(u_i,u_j),d_{D_2}(v_r,v_s)\}:u_j \in V(D_1),v_s \in V(D_2)\}\\
&=\max\{\max\{d_{D_1}(u_i,u_j):u_j \in V(D_1)\},\max\{d_{D_2}(v_r,v_s):v_s \in V(D_2)\}\\
&=\max\{ecc_{D_1}(u_i),ecc_{D_2}(v_r)\}.\\
\text{Therefore}\\
rad(D_1 \boxtimes D_2)&=\min_{(u_i,v_r) \in V(D_1 \boxtimes D_2)}\{ecc(u_i,v_r)\}\\&=\min_{\substack{u_i \in V(D_1),\\ v_r \in V(D_2)}}\{\max\{ecc_{D_1}(u_i),ecc_{D_2}(v_r)\}\}\\&=\max\{\min_{u_i \in V(D_1)}\{ecc(u_i)\},\min_{v_r \in V(D_2)}\{ecc(v_r)\}\}\\&=\max\{rad(D_1),rad(D_2)\}.\\
\text{Similarly}\\
diam(D_1 \boxtimes D_2)&=\max_{(u_i,v_r) \in V(D_1 \boxtimes D_2)}\{ecc(u_i,v_r)\}\\&=\max_{\substack{u_i \in V(D_1),\\ v_r \in V(D_2)}}\{\max\{ecc_{D_1}(u_i),ecc_{D_2}(v_r)\}\}\\&=\max\{\max_{u_i \in V(D_1)}\{ecc(u_i)\},\max_{v_r \in V(D_2)}\{ecc(v_r)\}\}\\&=\max\{diam(D_1),diam(D_2)\}.
\end{align*}
The strong product of two directed graphs $D_1$ and $D_2$ is strongly connected if and only if both $D_1$ and $D_2$ are strongly connected \cite{handbook}.
Also we can see that $N_{D_1 \boxtimes D_2}[(u_i,v_r)]=N_{D_1}[u_i] \times N_{D_2}[v_r]$.
In \cite{caceres2010boundary}, C\'aceres et al. presented  a description of the boundary type sets of undirected graphs.\\
The description is as follows.  In the case of undirected graphs $G$ and $H$ with diameters $diam(G)$ and $diam(H)$ and radii $rad(G)$ and $rad(H)$ respectively,
$\partial(G \boxtimes H)=(\partial(G) \times V(H)) \bigcup (V(G) \times \partial(H))$.  But we can see that this is not true in the case of digraphs.  See example \ref{exam1}.\\  Here $\partial(D_1)=\{u_1,u_3\}$ and $\partial(D_2)=\{v_1,v_4,v_5\}$.  Unlike undirected graphs, $\partial(D_1 \boxtimes D_2)=\{(u_1,v_1), (u_1,v_3), (u_1,v_4), (u_1,v_5), (u_3,v_1), (u_3,v_3), (u_3,v_4), (u_3,v_5),$\\ $(u_2,v_1), (u_2,v_3), (u_2,v_4), (u_2,v_5)\}$.  That is $(u_1,v_2), (u_3,v_2) \notin \partial(D_1 \boxtimes D_2)$.

\begin{example}\label{exam1}
	\begin{center}
		\begin{figure}[H]
			\psscalebox{.5}
		{	{
				\begin{pspicture}(-4,10)
				\pscircle(-1,9){.035}
				\put(-1.4,8.7){$u_1$}
				\put(-1,8.8){2}
				\pscircle(-1,6){.035}
				\put(-1.4,5.7){$u_2$}
				\put(-1,5.8){1}
				\psline{->}(-1,9)(-1,6)
				\psline{->}(-1,6)(-1,9)
				\pscircle(-1,3){.035}
				\put(-1.4,2.7){$u_3$}
				\put(-1,2.8){2}
				\put(-1.25,1.5){$D_1$}
				\psline{->}(-1,6)(-1,3)
				\psline{->}(-1,3)(-1,6)
				\pscircle(0,1){.035}
				\put(0,.5){$v_1$}
				\put(0,1.1){4}
				\pscircle(2.5,1){.035}
				\put(2.5,.5){$v_2$}
				\put(2.5,1.1){3}
				\pscircle(5,1){.035}
				\put(5,.5){$v_3$}
				\put(5,1.1){2}
				\pscircle(7.5,1){.035}
				\put(7.5,.5){$v_4$}
				\put(7.5,1.1){3}
				\pscircle(10,1){.035}
				\put(10,.5){$v_5$}
				\put(10,1.1){4}
				\put(5,-.25){$D_2$}
				
				\put(5,10.5){$D_1 \boxtimes D_2$}
				\psline{->}(9.95,1)(7.53,1)
				\psline{->}(7.45,1)(5.03,1)
				\psline{->}(5.03,1)(7.45,1)
				\psline{->}(4.95,1)(2.53,1)
				\psline{->}(2.53,1)(4.95,1)
				\psline{->}(2.45,1)(0.05,1)
				\pscircle(0,9){.035}
				\put(0,8.5){$(u_1,v_1)$}
				\put(0,9.1){4}
				\pscircle(0,6){.035}
				\put(0,5.5){$(u_2,v_1)$}
				\put(0,6.1)4
				\psline{->}(0,6)(0,9)
				\psline{->}(0,9)(0,6)
				\pscircle(0,3){.035}
				\put(0,2.5){$(u_3,v_1)$}
				\put(0,3.1){4}
				\psline{->}(0,6)(0,3)
				\psline{->}(0,3)(0,6)
				\pscircle(2.5,9){.035}
				\put(2.5,8.5){$(u_1,v_2)$}
				\put(2.5,9.1){3}
				\psline{->}(2.5,5.97)(2.5,3.1)
				\psline{->}(2.5,3.1)(2.5,5.97)
				\pscircle(2.5,6){.035}
				\put(2.5,5.5){$(u_2,v_2)$}
				\put(2.5,6.1){3}
				\psline{->}(2.5,6.05)(2.5,8.95)
				\psline{->}(2.5,8.97)(2.5,6.1)
				\psline{->}(2.5,6.1)(0,8.97)
				\psline{->}(2.5,8.9)(0.05,6.05)
				\pscircle(2.5,3){.035}
				\put(2.5,2.5){$(u_3,v_2)$}
				\put(2.5,3.1){3}
				\psline{->}(2.5,3.05)(0,5.95)
				\psline{->}(2.5,5.9)(0.05,3.05)
				\pscircle(5,9){.035}
				\put(5,8.5){$(u_1,v_3)$}
				\put(5,9.1){2}
				\psline(5,5.97)(5,3.1)
				\psline(5,3.1)(5,5.97)
				\pscircle(5,6){.035}
				\put(5,5.5){$(u_2,v_3)$}
				\put(5,6.1){2}
				\pscircle(5,3){.035}
				\put(5,2.5){$(u_3,v_3)$}
				\put(5,3.1){2}
				\psline{->}(5,3.05)(2.5,5.95)
				\psline{->}(2.5,5.95)(5,3.05)
				\psline{->}(2.5,8.95)(5,6.05)
				\psline{->}(5,5.95)(7.5,3.05)
				\psline{->}(5,8.95)(7.5,6.05)

				\psline{->}(5,6)(2.5,3)
				\psline{->}(2.5,3)(5,6)
				\psline{->}(2.55,6.05)(5,8.9)
				\psline{->}(5,8.9)(2.55,6.05)
				\psline{->}(2.5,8.95)(5,6.05)
				\psline{->}(5,5.95)(7.5,3.05)
				\psline{->}(5,8.95)(7.5,6.05)
				
				\psline{->}(7.5,6)(5,3)
				\psline{->}(5,3)(7.5,6)
				\psline{->}(5.05,6.05)(7.5,8.9)
				\psline{->}(7.5,8.9)(5.05,6.05)
				\psline{->}(5,6.1)(5,8.97)
				\psline{->}(5,8.97)(5,6.1)
				\psline{->}(5,6.05)(2.5,8.95)
				\pscircle(7.5,9){.035}
				\put(7.5,8.5){$(u_1,v_4)$}
				\put(7.5,9.1){4}
				\pscircle(7.5,6){.035}
				\put(7.5,5.5){$(u_2,v_4)$}
				\put(7.5,6.1){4}
				\pscircle(7.5,3){.035}
				\psline{->}(7.5,6.05)(5,8.95)
				\put(7.5,2.5){$(u_3,v_4)$}
				\put(7.5,3.1){4}
				\psline{->}(7.5,3.05)(5,5.95)
				\psline{->}(7.5,5.97)(7.5,3.1)
				\psline{->}(7.5,3.1)(7.5,5.97)
				\psline{->}(7.5,6)(7.5,9)
				\psline{->}(7.5,9)(7.5,6)
				\pscircle(10,9){.035}
				\put(10,8.5){$(u_1,v_5)$}
				\put(10,9.1){4}
				\pscircle(10,6){.035}
				\put(10,5.5){$(u_2,v_5)$}
				\put(10,6.1){4}
				\pscurve{->}(2.5,8.93)(6.125,8.5)(10,5.93)
				\pscurve{->}(2.5,5.93)(6.125,5.5)(10,2.93)
				
				\pscurve{->}(0,8.9)(2.5,5.5)(5,6.08)
				\pscurve{->}(0,5.9)(2.5,2.5)(5,3.08)

				\pscurve{->}(0,5.9)(2.5,8)(5,8.88)
				\pscurve{->}(0,2.9)(2.5,5)(5,5.95)
				
				\pscurve{->}(2.65,6.1)(5,8)(9.86,8.92)
				\pscurve{->}(2.65,3.1)(5,5)(9.86,5.92)
				
				\pscircle(10,3){.035}
				\put(10,2.5){$(u_3,v_5)$}
				\put(10,3.1){4}
				\psline{->}(10,6)(10,3)
				\psline{->}(10,3)(10,6)
				\pscurve(10,3)(7.5,6)
				\psline{->}(10,6)(10,9)
				\psline{->}(10,9)(10,6)
				\psline{->}(9.9,3)(7.6,3)
				\psline{->}(9.9,6)(7.6,6)
				\psline{->}(9.9,9)(7.6,9)
				
				\psline{->}(9.95,6.1)(7.6,9)
				\psline{->}(9.95,9)(7.6,6.1)
				\psline{->}(9.95,3.1)(7.6,6)
				\psline{->}(9.95,6)(7.6,3.1)
				\psline{->}(7.48,3)(5.1,3)
				
				\psline{->}(7.4,6)(5.1,6)
				\psline{->}(7.4,9)(5.1,9)
				
				\psline{->}(4.98,3)(2.6,3)
				\psline{->}(4.98,6)(2.6,6)
				\psline{->}(4.98,9)(2.6,9)
				
				\psline{->}(2.48,3)(0.1,3)
				\psline{->}(2.48,6)(0.1,6)
				\psline{->}(2.48,9)(0.1,9)
				
				\pscurve{->}(0.05,1)(2.5,0)(5.03,.99)
				\pscurve{->}(2.51,1)(6.125,2)(9.9,1)
				
				\pscurve{->}(0.1,3)(2.5,2)(4.9,3)
				\pscurve{->}(2.6,3)(6.125,5)(9.9,3)
				
				\pscurve{->}(0.1,6)(2.5,5)(4.9,6)
				\pscurve{->}(2.6,6)(6.125,5)(9.9,6)
				
				\pscurve{->}(0.1,9)(2.5,8)(4.95,9)
				\pscurve{->}(2.6,9)(6.125,10)(9.95,9)
				\end{pspicture}
			}}
		\end{figure}
	\end{center}
\end{example}
\begin{theorem}
	Let $D_1$ and $D_2$ be two strongly connected digraphs.  Then
	$\partial(D_1 \boxtimes D_2)=A_1\bigcup A_2\bigcup A_3$, where $A_1=\partial(D_1) \times \partial(D_2)$,
	$A_2 =\{(u_i,v_r)|u_i \in \partial(D_1), v_r \notin \partial(D_2), d(v_r,v_s) \leq ecc(u_i), \text{for all } v_s \in N(v_r)\}$ and\\
	$A_3=\{(u_i,v_r)|u_i \notin \partial(D_1), v_r \in \partial(D_2), d(u_i,u_p) \leq ecc(v_r), \text{for all } u_p \in N(u_i)\}$.
\end{theorem}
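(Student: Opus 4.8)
The plan is to reduce membership in $\partial(D_1\boxtimes D_2)$ to a pair of conditions living in the two factors, and then to prove the claimed identity by double inclusion, using commutativity of $\boxtimes$ to deduce the statement for $A_3$ from that for $A_2$. First I would fix $(u_i,v_r)$ together with a prospective vertex $(u_j,v_s)$ of which $(u_i,v_r)$ might be a boundary vertex, and rewrite the boundary condition. Since $N_{D_1\boxtimes D_2}[(u_i,v_r)]=N_{D_1}[u_i]\times N_{D_2}[v_r]$ and $d_{D_1\boxtimes D_2}((u_j,v_s),(u_p,v_t))=\max\{d(u_j,u_p),d(v_s,v_t)\}$, a generic neighbour $(u_p,v_t)$ of $(u_i,v_r)$ contributes the quantity $\max\{d(u_j,u_p),d(v_s,v_t)\}$, so the requirement that no neighbour is farther from $(u_j,v_s)$ than $(u_i,v_r)$ decouples coordinatewise. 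Writing $\alpha=d(u_j,u_i)$, $\beta=d(v_s,v_r)$ and $m=\max\{\alpha,\beta\}$, I expect to show that $(u_i,v_r)$ is a boundary vertex of $(u_j,v_s)$ exactly when $d(u_j,u_p)\le m$ for every $u_p\in N(u_i)$ and $d(v_s,v_t)\le m$ for every $v_t\in N(v_r)$; the mixed neighbours are automatically dominated once these two pure conditions hold, and each pure condition is recovered by taking $v_t=v_r$ or $u_p=u_i$.

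The hinge of the whole argument is the dichotomy this reduction produces. If $\alpha\ge\beta$ then $m=\alpha$ and the first condition is precisely the statement that $u_i$ is a boundary vertex of $u_j$ in $D_1$, so $u_i\in\partial(D_1)$; symmetrically, $\beta\ge\alpha$ forces $v_r\in\partial(D_2)$. In particular, at least one of $u_i\in\partial(D_1)$, $v_r\in\partial(D_2)$ always holds, which is what makes $A_1,A_2,A_3$ exhaustive. For the backward inclusion I would treat the three sets in turn. For $A_1$, if $u_i$ is a boundary vertex of some $u_j$ and $v_r$ of some $v_s$, then both pure conditions hold simultaneously for $(u_j,v_s)$, so $(u_i,v_r)\in\partial(D_1\boxtimes D_2)$. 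For $A_2$ (and then $A_3$ by commutativity), given $(u_i,v_r)$ with $u_i\in\partial(D_1)$, $v_r\notin\partial(D_2)$ and $d(v_r,v_t)\le ecc(u_i)$ for all $v_t\in N(v_r)$, I would try the vertex $(u_j,v_r)$, where $u_j$ is chosen with $d(u_j,u_i)=ecc(u_i)$ and such that $u_i$ is still a boundary vertex of $u_j$. Then $\beta=0$ and $m=\alpha=ecc(u_i)$, the first pure condition holds by the choice of $u_j$, and the second holds because $d(v_r,v_t)\le ecc(u_i)=m$ for every neighbour $v_t$ of $v_r$.

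For the forward inclusion, suppose $(u_i,v_r)\in\partial(D_1\boxtimes D_2)$ with $(u_j,v_s)$ as above, and apply the dichotomy. If both $u_i\in\partial(D_1)$ and $v_r\in\partial(D_2)$ we are in $A_1$. Suppose instead that $u_i\in\partial(D_1)$ but $v_r\notin\partial(D_2)$. Because $v_r$ is not a boundary vertex of $v_s$, some neighbour $v_{t_0}\in N(v_r)$ satisfies $d(v_s,v_{t_0})>\beta$; combined with the pure condition $d(v_s,v_{t_0})\le m$ this forces $m=\alpha>\beta$, whence $\alpha=d(u_j,u_i)\le ecc(u_i)$. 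It then remains to promote the bounds $d(v_s,v_t)\le m$, valid for all $v_t\in N(v_r)$, to the intrinsic bounds $d(v_r,v_t)\le ecc(u_i)$ demanded by $A_2$; doing so places $(u_i,v_r)$ in $A_2$, and the mirror-image case with the factors exchanged places it in $A_3$. By the exhaustiveness noted above, no further case arises.

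The step I expect to be the main obstacle is exactly this passage between the distance $d(v_s,v_t)$ that the chosen vertex controls and the intrinsic distance $d(v_r,v_t)$ appearing in the definition of $A_2$. In the backward direction it is what dictates the special choice $v_s=v_r$, and it rests on being able to select $u_j$ that simultaneously realizes $d(u_j,u_i)=ecc(u_i)$ and retains $u_i$ as a boundary vertex; verifying the existence of such a $u_j$ is delicate, since an eccentric vertex of $u_i$ need not in general witness that $u_i$ is a boundary vertex. In the forward direction the same tension resurfaces whenever $v_s\ne v_r$: there one must lean on $v_r\notin\partial(D_2)$, which as shown forces $\alpha>\beta$, in order to convert the controlled bound into the required intrinsic one. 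I would therefore concentrate the technical effort on these transfer estimates, treating the coordinatewise reduction and the dichotomy as the reusable backbone and regarding the remaining bookkeeping as routine.
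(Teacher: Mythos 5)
Your coordinatewise reduction is correct: $(u_i,v_r)$ is a boundary vertex of $(u_j,v_s)$ precisely when $d(u_j,u_p)\le m$ for every $u_p\in N[u_i]$ and $d(v_s,v_t)\le m$ for every $v_t\in N[v_r]$, where $m=\max\{d(u_j,u_i),d(v_s,v_r)\}$; the dichotomy ($\alpha\ge\beta$ forces $u_i\in\partial(D_1)$, $\beta\ge\alpha$ forces $v_r\in\partial(D_2)$) follows, and this is in fact the skeleton of the paper's own proof. The problem is that the two steps you yourself flag as ``delicate'' and postpone are not residual bookkeeping but the entire substance of both inclusions, and your proposal never discharges them: (i) for $A_2\subseteq\partial(D_1\boxtimes D_2)$ you need a vertex $u_j$ that simultaneously witnesses $u_i\in\partial(D_1)$ and realizes $d(u_j,u_i)=ecc(u_i)$; (ii) for the forward inclusion you need to convert the witness-relative bounds $d(v_s,v_t)\le m$ into the intrinsic bounds $d(v_r,v_t)\le ecc(u_i)$ demanded by $A_2$. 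The paper performs (i) by fiat (it simply selects, among the witnesses of $u_i$, one at distance $ecc(u_i)$) and (ii) via the unsupported claim that if $(u_i,v_r)$ is not a boundary vertex of $(u_j,v_r)$ then it is not a boundary vertex of any $(u_j,v_s)$; so your attempt is essentially the paper's argument minus its two unjustified moves, which leaves it incomplete.

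Moreover, neither (i) nor (ii) can be repaired, because both claims are false, and with them the stated identity itself. For (i): let $D_1$ be bidirected with $u_i$ adjacent to $x,y,u_j$, with $u_j$ adjacent to $x$ and to $y$, and with a pendant path $y\leftrightarrow a\leftrightarrow w$; then the only vertex of which $u_i$ is a boundary vertex is $u_j$, at distance $1$, while $ecc(u_i)=3$. Pair this $D_1$ with $D_2$ a directed triangle $v_r\to c_1\to c_2\to v_r$ plus a bidirected path $v_r\leftrightarrow p_1\leftrightarrow p_2$: one checks $v_r\notin\partial(D_2)$ and $d(v_r,v_q)\le 2\le ecc(u_i)$ for all $v_q\in N(v_r)$, so $(u_i,v_r)\in A_2$; yet by your own reduction any witness $(u_j',v_s)$ with $u_j'\ne u_j$ forces $m=d(v_s,v_r)$ and hence that $v_r$ be a boundary vertex of $v_s$ (impossible), while $u_j'=u_j$ requires all of $N[v_r]$ to lie within distance $\max\{1,d(v_s,v_r)\}$ of $v_s$, which fails for every $v_s$; hence $(u_i,v_r)\notin\partial(D_1\boxtimes D_2)$. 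For (ii): let $D_1$ be the bidirected path $u_j\leftrightarrow u_1\leftrightarrow u_i$ (so $u_i\in\partial(D_1)$, $ecc(u_i)=2$) and let $D_2$ be the directed $4$-cycle $v_r\to v_{q_0}\to a_1\to a_2\to v_r$ with bidirected edges $v_r\leftrightarrow z$, $v_r\leftrightarrow v_s$, $v_s\leftrightarrow a_1$. Every vertex of $N[v_r]=\{v_r,v_{q_0},z,v_s,a_2\}$ is within distance $2$ of $v_s$ and every vertex of $N[u_i]$ is within distance $2$ of $u_j$, so $(u_i,v_r)$ is a boundary vertex of $(u_j,v_s)$; also $v_r\notin\partial(D_2)$, since for each vertex $w$ of $D_2$ one of the neighbours $v_{q_0},z$ of $v_r$ is strictly farther from $w$ than $v_r$ is. Yet $d(v_r,v_{q_0})=3>2=ecc(u_i)$, so $(u_i,v_r)$ lies in none of $A_1,A_2,A_3$. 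So the equality fails in both directions; your proof stalls exactly where it must, and any correct version of the theorem has to replace the $ecc$-conditions in $A_2,A_3$ by conditions quantified over actual boundary witnesses in the factors (e.g.\ $(u_i,v_r)\in\partial(D_1\boxtimes D_2)$ iff there exist $u_j,v_s$ with $\max_{u_p\in N[u_i]}d(u_j,u_p)$ and $\max_{v_t\in N[v_r]}d(v_s,v_t)$ both at most $\max\{d(u_j,u_i),d(v_s,v_r)\}$).
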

	\proof
	Suppose that $(u_i,v_r) \in \partial(D_1 \boxtimes D_2)$. Then there exists a vertex $(u_j,v_s) \in V(D_1 \boxtimes D_2)$ such that for all vertices $(u_k,v_q) \in N[(u_i,v_r)]$, $d((u_j,v_s),(u_i,v_r)) \geq d((u_j,v_s),(u_k,v_q))$.  Since $d((u_j,v_s),(u_i,v_r))=\max\{d(u_j,u_i),d(v_s,v_r)\}$, there are three possibilities.
	\begin{enumerate}
		\item
		$d(u_j,u_i) \geq d(u_j,u_k)$, for all $u_j \in N[u_i]$ and $d(v_s,v_r) \geq d(v_s,v_q))$, for all $v_q \in N[v_r]$.
		
		\item
		Only $d(u_j,u_i) \geq d(u_j,u_k)$, for all $u_j \in N[u_i]$ holds. 
		\item
		Only $d(v_s,v_r) \geq d(v_s,v_q)$, for all $v_q \in N[v_r]$ holds.
	\end{enumerate}
	In the first case, $u_i \in \partial(D_1)$ and $v_r \in \partial(D_2)$ and hence $(u_i,v_r) \in A_1=\partial(D_1) \times \partial(D_2)$.\\  
	In the second case, as $v_r \notin \partial(D_2)$, for all $v_s \in V(D_2)$, there exists $v_q \in N(v_r)$ such that $d(v_s,v_r)<d(v_s,v_q)$.  
	Then $d((u_j,v_s),(u_i,v_r))=\max\{d(u_j,u_i),d(v_s,v_r)\}=d(u_j,u_i)$, and $d(u_j,u_i) > d(v_s,v_r)$ for if $d((u_j,v_s),(u_i,v_r))=d(v_s,v_r)$, we get a contradiction since then $d((u_j,v_s),(u_k,v_q))=d(v_s,v_q) > d(v_s,v_r)$ so that $(u_i,v_r)$ cannot be a boundary vertex of $(u_j,v_s)$.  
	Thus we have $d((u_j,v_s),(u_i,v_r))=d((u_j,v_r),(u_i,v_r))=d(u_j,u_i)$.\\ 
	Now if $d(v_r,v_q)>d(u_j,u_i)$, for some $v_q \in N(v_r)$, then $d((u_j,v_r),(u_i,v_q))=\max\{d(u_j,u_i),d(v_r,v_q)\}=d(v_r,v_q)$, which is a contradiction since then $(u_i,v_r)$ could not be the boundary vertex of $(u_j,v_r)$ and hence that of any $(u_j,v_s)$.  Thus necessarily $d(v_r,v_q) \leq d(u_j,u_i)$ for all $v_q \in N(v_r)$.  Thus in this case, for $(u_i,v_r)$ to be a boundary vertex in $D_1 \boxtimes D_2$, it is necessary that $d(v_r,v_q) \leq ecc(u_i)$ for all $v_q \in N(v_r)$ since then $(u_i,v_r)$ will be a boundary vertex of $(u_b,v_r)$ where $u_b \in V(D_1)$ is such that $d(u_b,u_i)=ecc(u_i)$.  So in the second case, $(u_i,v_r) \in A_2$.\\
	Similarly in the third case, we can prove that if $v_r \in \partial(D_2)$ and $u_i \notin \partial(D_1)$, then $(u_i,v_r) \in \partial(D_1 \boxtimes D_2)$ implies that $d(u_i,u_p) \leq ecc(v_r), \text{for all } u_p \in N(u_i)$.  So in the third case, $(u_i,v_r) \in A_3$.
	Hence $\partial(D_1 \boxtimes D_2) \subseteq A_1 \bigcup A_2 \bigcup A_3$.
	
	Conversely, suppose that $(u_i,v_r) \in A_1 \bigcup A_2 \bigcup A_3$.
	First let $(u_i,v_r) \in A_1$.  Then $u_i \in \partial(D_1)$ and $v_r \in \partial(D_2)$.  So there exists a vertex $u_j \in V(D_1)$ such that $d(u_j,u_i) \geq d(u_j,u_k)$ for every $u_k \in N[u_i]$ and there exists a vertex $v_s \in V(D_2)$ such that $d(v_s,v_r) \geq d(v_s,v_q)$ for every $v_q \in N[v_r]$.  Hence in $D_1 \boxtimes D_2$, $d((u_j,v_s),(u_i,v_r))=\max\{d(u_j,u_i),d(v_s,v_r)\} \geq \max\{d(u_j,u_k),d(v_s,v_q)\}=d((u_j,v_s),(u_k,v_q))$ for all vertices $(u_k,v_q) \in N[(u_i,v_r)]$.  Thus $A_1 \subseteq \partial(D_1 \boxtimes D_2)$.
	
	Now let $(u_i,v_r) \in A_2$.  Then  $u_i \in \partial(D_1), v_r \notin \partial(D_2)$ and $d(v_r,v_q) \leq ecc(u_i), \text{for all } v_q \in N[v_r]$.  Since $u_i \in \partial(D_1)$, there exists atleast one vertex $u_j \in V(D_1)$ such that $d(u_j,u_i) \geq d(u_j,u_k)$ for every $u_k \in N[u_i]$.  Of these vertices, let $u_b$ a vertex such that $d(u_b,u_i)=ecc(u_i)$.  Hence in $D_1 \boxtimes D_2$, $d((u_b,v_r),(u_i,v_r))=d(u_b,u_i) =ecc(u_i) \geq d(v_r,v_q)$ for all $v_q \in N[v_r]$.
	Also $d(u_b,u_i) \geq  d(u_b,u_k)$ for all $u_k \in N[u_i]$.  Hence $d((u_b,v_r),(u_i,v_r)) \geq d((u_b,v_r),(u_k,v_q))$ for all vertices $(u_k,v_q) \in N[(u_i,v_r)]$.  Thus $(u_i,v_r)$ is a boundary vertex of $(u_b,v_r)$ in $D_1 \boxtimes D_2$ and hence $A_2 \subseteq \partial(D_1 \boxtimes D_2)$.\\
		Similarly let $(u_i,v_r) \in A_3$.  Then  $u_i \notin \partial(D_1), v_r \in \partial(D_2)$ and $d(u_i,u_p) \leq ecc(v_r), \text{for all } u_p \in N[u_i]\}$.  Since $v_r \in \partial(D_2)$, there exists atleast one vertex $v_s \in V(D_1)$ such that $d(v_s,v_r) \geq d(v_s,v_q)$ for all $v_q \in N[v_r]$.  Of these vertices, let $v_c$ be a vertex such that $d(v_c,v_r)=ecc(v_r)$.  Hence in $D_1 \boxtimes D_2$, $d((u_i,v_c),(u_i,v_r))=d(v_c,v_r) =ecc(v_r) \geq d(u_i,u_k)$ for all $u_k \in N[u_i]$.
	Also $d(v_c,v_r) \geq  d(v_c,v_q)$ for all $v_q \in N[v_r]$.  Hence $d((u_i,v_c),(u_i,v_r)) \geq d((u_i,v_c),(u_k,v_q))$ for all vertices $(u_k,v_q) \in N[(u_i,v_r)]$.  Thus $(u_i,v_r)$ is a boundary vertex of $(u_i,v_c)$ in $D_1 \boxtimes D_2$ and hence $A_3 \subseteq \partial(D_1 \boxtimes D_2)$.\\
	Thus we get 
	$A_1 \bigcup A_2 \bigcup A_3 \subseteq \partial(D_1 \boxtimes D_2)$.\qed
The results concerning the periphery, eccentricity and contour of the strong product of two digraphs are the same as that of undirected graphs which are described in \cite{caceres2010boundary}.  Here we provide their proofs to cover the directed case.
\begin{proposition}
	a) If $diam(D_1) < diam(D_2)$, then $Per(D_1 \boxtimes D_2) =V(D_1) \times Per(D_2)$.\\
	b) If $diam(D_1) = diam(D_2)$, then $Per(D_1 \boxtimes D_2) =Per(D_1) \times V(D_2) \bigcup V(D_1) \times Per(D_2)$.
\end{proposition}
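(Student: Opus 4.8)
The plan is to reduce the entire statement to two identities already established in this section, namely the eccentricity formula $ecc_{D_1 \boxtimes D_2}(u_i,v_r)=\max\{ecc_{D_1}(u_i),ecc_{D_2}(v_r)\}$ and the diameter formula $diam(D_1 \boxtimes D_2)=\max\{diam(D_1),diam(D_2)\}$. Since by definition $(u_i,v_r)\in Per(D_1 \boxtimes D_2)$ precisely when $ecc_{D_1 \boxtimes D_2}(u_i,v_r)=diam(D_1 \boxtimes D_2)$, the whole proposition becomes a question of when $\max\{ecc_{D_1}(u_i),ecc_{D_2}(v_r)\}$ attains $\max\{diam(D_1),diam(D_2)\}$. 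I would also invoke repeatedly the elementary fact that $ecc_D(x)\le diam(D)$ for every vertex $x$, with equality exactly when $x\in Per(D)$. After that the proof splits into the two announced cases.

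For part (a) I would assume $diam(D_1)<diam(D_2)$, so that $diam(D_1 \boxtimes D_2)=diam(D_2)$. For any $u_i\in V(D_1)$ one has $ecc_{D_1}(u_i)\le diam(D_1)<diam(D_2)$, so the first argument of the maximum can never reach $diam(D_2)$. Hence $\max\{ecc_{D_1}(u_i),ecc_{D_2}(v_r)\}=diam(D_2)$ holds if and only if $ecc_{D_2}(v_r)=diam(D_2)$, that is $v_r\in Per(D_2)$, with no constraint whatsoever on $u_i$. This yields $Per(D_1 \boxtimes D_2)=V(D_1)\times Per(D_2)$.

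For part (b) I would assume $diam(D_1)=diam(D_2)=:d$, so $diam(D_1 \boxtimes D_2)=d$. Now both $ecc_{D_1}(u_i)\le d$ and $ecc_{D_2}(v_r)\le d$, so their maximum equals $d$ exactly when at least one of the two equals $d$, i.e. when $u_i\in Per(D_1)$ or $v_r\in Per(D_2)$. Translating this disjunction into set operations gives $Per(D_1 \boxtimes D_2)=(Per(D_1)\times V(D_2))\cup(V(D_1)\times Per(D_2))$.

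The argument is essentially bookkeeping once the eccentricity formula is available, so there is no genuine obstacle here; the only point that demands care is the strict inequality in part (a), since it is exactly what forces every $D_1$-eccentricity to stay strictly below the global diameter and thereby removes any peripherality requirement on the first coordinate, in contrast to the symmetric situation of part (b).
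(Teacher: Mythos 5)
Your proof is correct and follows essentially the same route as the paper: both arguments rest entirely on the identity $ecc_{D_1 \boxtimes D_2}(u_i,v_r)=\max\{ecc_{D_1}(u_i),ecc_{D_2}(v_r)\}$ together with $diam(D_1 \boxtimes D_2)=\max\{diam(D_1),diam(D_2)\}$, using the strict inequality in part (a) to rule out any peripherality condition on the first coordinate. The only difference is presentational: you phrase it as a single if-and-only-if chain, while the paper verifies the two set inclusions separately.
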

\begin{proof}
	a) Let $diam(D_2)=n$.  Let $v_r \in Per(D_2)$.  Then for all $u_i \in V(D_1)$, $ecc(u_i,v_r)=\max\{ecc(u_i), ecc(v_r)\}=n$.  Hence $(u_i,v_r) \in Per(D_1 \boxtimes D_2)$.  Also if $v_r \notin Per(D_2)$, then since $ecc(u_i,v_r) <n$, $(u_i,v_r) \notin Per(D_1 \boxtimes D_2)$.  Thus $Per(D_1 \boxtimes D_2) =V(D_1) \times Per(D_2)$.\\
	b) Let $diam(D_1)=diam(D_2)=n$.  Let $u_i \in Per(D_1)$.  Then for all $v_r \in V(D_2)$, $(u_i,v_r) \in Per(D_1 \boxtimes D_2)$, as $ecc(u_i,v_r)=\max\{ecc(u_i), ecc(v_r)\}=n$.  Hence $(u_i,v_r) \in Per(D_1 \boxtimes D_2)$.
	Also if $v_r \in Per(D_2)$, then $(u_i,v_r) \in Per(D_1 \boxtimes D_2)$ for all $u_i \in V(D_1)$.  Thus $Per(D_1) \times V(D_2) \bigcup V(D_1) \times Per(D_2) \subseteq Per(D_1 \boxtimes D_2)$.
	Now if $(u_i,v_r) \in Per(D_1 \boxtimes D_2)$, then $ecc(u_i,v_r)=\max\{diam(D_1), diam(D_2)\}=n$.  Thus necessarily atleast one of $ecc(u_i)$ or $ecc(v_r)$ must be equal to $n$.  Hence either $u_i \in Per(D_1)$ or $v_r \in Per(D_2)$. 
	So we get $Per(D_1 \boxtimes D_2) \subseteq Per(D_1) \times V(D_2) \bigcup V(D_1) \times Per(D_2)$.
\end{proof}
\begin{proposition}
	Let $D_1$ and $D_2$ be two strongly connected digraphs.  Then
	\begin{enumerate}
		\item
		If $rad(D_1)=rad(D_2)$, then $Ecc(D_1 \boxtimes D_2)=[Ecc(D_1) \times V(D_2)] \bigcup [V(D_1) \times Ecc(D_2)]$.
		\item
		If $rad(D_1)< rad(D_2)$, then $Ecc(D_1 \boxtimes D_2)=[\bigcup_{u_i \geq r_ {D_2}}Ecc(u_i) \times V(D_2)] \bigcup [V(D_1) \times Ecc(D_2)]$.
	\end{enumerate}
\end{proposition}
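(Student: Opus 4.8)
The plan is to establish each equality by two inclusions, using throughout the two identities already derived in this section, $d_{D_1 \boxtimes D_2}((u_j,v_s),(u_i,v_r)) = \max\{d_{D_1}(u_j,u_i), d_{D_2}(v_s,v_r)\}$ and $ecc(u_j,v_s) = \max\{ecc(u_j), ecc(v_s)\}$. Throughout I read $Ecc(u_i)$ as the set of eccentric vertices of the single vertex $u_i$ in $D_1$ and $r_{D_2}$ as $rad(D_2)$, so that $\bigcup_{u_i \geq r_{D_2}} Ecc(u_i)$ denotes the set of all $w \in V(D_1)$ that are eccentric to some $u_i$ with $ecc(u_i) \geq rad(D_2)$.

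For the inclusion ``$\subseteq$'', suppose $(u_i,v_r)$ is an eccentric vertex witnessed by $(u_j,v_s)$, and set $M = \max\{ecc(u_j), ecc(v_s)\} = \max\{d(u_j,u_i), d(v_s,v_r)\}$. Since $d(u_j,u_i) \leq ecc(u_j) \leq M$ and $d(v_s,v_r) \leq ecc(v_s) \leq M$, at least one of the two left-hand terms equals $M$. If $d(u_j,u_i) = M$, then the chain $d(u_j,u_i) \leq ecc(u_j) \leq M$ forces $ecc(u_j) = M = d(u_j,u_i)$, so $u_i$ is eccentric to $u_j$, and moreover $ecc(u_j) = M \geq ecc(v_s) \geq rad(D_2)$; hence $(u_i,v_r)$ lies in $\big(\bigcup_{ecc(u') \geq rad(D_2)} Ecc(u')\big) \times V(D_2)$. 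If instead $d(v_s,v_r) = M$, the symmetric argument gives $v_r \in Ecc(D_2)$ and $(u_i,v_r) \in V(D_1) \times Ecc(D_2)$. This single argument yields ``$\subseteq$'' in both parts at once; in part (1) the side condition $ecc(u') \geq rad(D_2)$ is vacuous because $ecc(u') \geq rad(D_1) = rad(D_2)$ always, which collapses the first set to $Ecc(D_1) \times V(D_2)$.

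For ``$\supseteq$'' the idea is to build a witness by inserting a center of the factor of smaller radius into the free coordinate. Given $(w,v_r)$ with $w$ eccentric to some $u_j$ having $ecc(u_j) \geq rad(D_2)$, choose $v_s$ with $ecc(v_s) = rad(D_2) \leq ecc(u_j)$; then $d((u_j,v_s),(w,v_r)) = \max\{ecc(u_j), d(v_s,v_r)\} = ecc(u_j) = ecc(u_j,v_s)$, so $(w,v_r)$ is eccentric to $(u_j,v_s)$. Dually, given $(u_i,v_r)$ with $v_r$ eccentric to some $v_s$, choose a center $u_t$ of $D_1$; since $ecc(u_t) = rad(D_1) \leq rad(D_2) \leq ecc(v_s)$, the same computation with coordinates swapped shows $(u_i,v_r)$ is eccentric to $(u_t,v_s)$. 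In part (1) the first argument applies to every $u_j$ realizing $w$ (as $ecc(u_j) \geq rad(D_1) = rad(D_2)$), so all of $Ecc(D_1) \times V(D_2)$ is recovered; in part (2) only the $u_j$ with $ecc(u_j) \geq rad(D_2)$ qualify, which is precisely the restriction in the stated union.

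I expect the delicate point to be the forward direction of part (2): one must notice that whenever the $D_1$-coordinate is the one attaining $M$, the witness $u_j$ is automatically forced to satisfy $ecc(u_j) \geq rad(D_2)$ through $ecc(u_j) = M \geq ecc(v_s) \geq rad(D_2)$, so that the restricted union is necessary and not merely sufficient. The sufficiency, by contrast, hinges solely on the freedom to select a radius-minimal vertex in the complementary factor, and this is the only place where the hypothesis $rad(D_1) \leq rad(D_2)$ is invoked; the one computation needing care is the verification that this insertion actually attains the product eccentricity rather than exceeding it.
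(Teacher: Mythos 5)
Your proof is correct and takes essentially the same route as the paper: both directions rest on the product identities $d_{D_1 \boxtimes D_2}=\max\{d_{D_1},d_{D_2}\}$ and $ecc_{D_1\boxtimes D_2}=\max\{ecc_{D_1},ecc_{D_2}\}$, with the reverse inclusions obtained exactly as in the paper by inserting a center (minimum-eccentricity vertex) of the appropriate factor into the free coordinate to build the witness. Your forward direction is in fact a slight improvement in presentation: the paper establishes the necessity of the restriction $ecc(u_j)\geq rad(D_2)$ in part (2) through a case split on $v_r\in Ecc(D_2)$ followed by a somewhat loose contradiction argument, whereas your direct chain $ecc(u_j)=M\geq ecc(v_s)\geq rad(D_2)$ handles both parts uniformly.
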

\begin{proof}
	\begin{enumerate}
		\item
		First we will prove that $Ecc(D_1 \boxtimes D_2) \subseteq[Ecc(D_1) \times V(D_2)] \bigcup [V(D_1) \times Ecc(D_2)]$.\\Let $(u_i,v_r) \in Ecc(D_1 \boxtimes D_2)$.  Then there exists a vertex $(u_j,v_s)$ such that $ecc(u_j,v_s)=d((u_j,v_s),(u_i,v_r))=\max\{d(u_j,u_i),d(v_s,v_r)\}$.  Since $ecc(u_j,v_s)=\max\{ecc(u_j),ecc(v_s)\}$, and $ecc(u_j) \geq d(u_j,u_i)$ and $ecc(v_s) \geq d(v_s,v_r)$, atleast one of $ecc(u_j)=d(u_j,u_i)$ and $ecc(v_s)=d(v_s,v_r)$ must hold.  So either $u_i$ is an eccentric vertex of $u_j$ or
		$v_r$ is an eccentric vertex of $v_s$.  Hence $(u_i,v_r) \in [Ecc(D_1) \times V(D_2)] \bigcup [V(D_1) \times Ecc(D_2)]$.
		
		Let $rad(D_1)=rad(D_2)=n$.  Let $u_i \in Ecc(D_1)$.  So there exists a vertex $u_j \in V(D_1)$ such that $ecc(u_j)=d(u_j,u_i)$.  Consider $(u_i,v_r) \in V(D_1 \boxtimes D_2)$, where $v_r$ is an arbitrary vertex in $D_2$.  Since $rad(D_2)=n$, there exists a vertex  $v_s \in V(D_1)$ such that $ecc(v_s)=n$.  Hence $d(v_s,v_r) \leq n$ and so $ecc(u_j,v_s)=\max\{ecc(u_j),ecc(v_s)\}= \max\{ecc(u_j),n\}=ecc(u_j)$.  Thus we have $d((u_j,v_s),(u_i,v_r))=\max\{d(u_j,u_i),(v_s,v_r)\}=ecc(u_j)=ecc(u_j,v_s)$.  So $(u_i,v_r)$ is an eccentric vertex of $(u_j,v_s)$.  Thus if  $u_i \in Ecc(D_1)$, then $(u_i,v_r) \in Ecc(D_1 \boxtimes D_2)$ for all $v_r \in V(D_2)$.  Similarly, we can prove that if $v_q \in Ecc(D_2)$, then $(u_k,v_q) \in Ecc(D_1 \boxtimes D_2)$ for all $u_k \in V(D_1)$.  Thus $[Ecc(D_1) \times V(D_2)] \bigcup [V(D_1) \times Ecc(D_2)] \subseteq Ecc(D_1 \boxtimes D_2)$ and so the result holds.
		\item
		$rad(D_1)<rad(D_2)=n$.  Let $u_i \in V(D_1)$.  Here there arise two cases.  Either $v_r \in Ecc(D_2)$ or $v_r \notin Ecc(D_2)$.\\First suppose that $v_r \in Ecc(D_2)$.  Then there exists a vertex $v_s \in V(D_2)$ such that $ecc(v_s)=d(v_s,v_r)$.  We have a vertex $u_p \in V(D_1)$ such that $ecc(u_p)=rad(D_1)$.  Then since $rad(D_2) >ecc(u_p)$, we get $ecc(u_p,v_s)=\max\{ecc(u_p),ecc(v_s)\}=ecc(v_s)$.  Also, $d((u_p,v_s),(u_i,v_r))=\max\{d(u_p,u_i),d(v_s,v_r)\}=ecc(v_s)$.  Thus $(u_i,v_r)$ is an eccentric vertex of $(u_p,v_s)$.  So in this case, we have $V(D_1) \times Ecc(D_2) \subseteq Ecc(D_1 \boxtimes D_2)$.\\ 
		Now suppose that $v_r \notin Ecc(D_2)$.  Let $v_q \in V(D_2)$ be such that $ecc(v_q)=rad(D_2)$.  
		Take $\bigcup_{u_i \geq r_ {D_2}}Ecc(u_i)=A$.  Let $u_k \in A$.  Then there exists a vertex $u_p \in V(D_1)$ such that $ecc(u_p) \geq rad(D_2)$ and $ecc(u_p)=d(u_p,u_k)$.  Then $d((u_p,v_q),(u_k,v_r))=\max\{d(u_p,u_k),d(v_q,v_r)\}=d(u_p,u_k)=ecc(u_p)=ecc(u_p,v_q)$ and hence $(u_k,v_r)$ is an eccentric vertex of $(u_p,v_q)$. 
		So here we get $\bigcup_{u_i \geq r_ {D_2}}Ecc(u_i) \times V(D_2) \subseteq Ecc(D_1 \boxtimes D_2)$.\\
		Thus $[\bigcup_{u_i \geq r_ {D_2}}Ecc(u_i) \times V(D_2)] \bigcup [V(D_1) \times Ecc(D_2)] \subseteq Ecc(D_1 \boxtimes D_2)$.\\
		Conversely, let $(u_k,v_r) \in Ecc(D_1 \boxtimes D_2)$.  Then there exists a vertex $(u_j,v_s) \in V(D_1 \boxtimes D_2)$ such that $ecc(u_j,v_s)=d((u_j,v_s),(u_k,v_r))=\max\{d(u_j,u_k),(v_s,v_r)\}=\max\{ecc(u_j),ecc(v_s)\}$.  If $v_r \in Ecc(D_2)$, we get $(u_k,v_r) \in V(D_1) \times Ecc(D_2)$.  So let $(u_k,v_r) \in Ecc(D_1 \boxtimes D_2)$ and $v_r \notin Ecc(D_2)$.  Then for all $v_s \in V(D_2)$, $ecc(v_s) > d(v_s,v_r)$.  Hence $ecc(u_j,v_s)=ecc(u_j)=d(u_j,u_k)$.  So if possible, suppose that $u_k \notin A$.  Thus there is no vertex $u_i$ such that $ecc(u_i)=d(u_i,u_k)$ and $ecc(u_i) \geq rad(D_2)$.  Hence for all vertices $u_i$ such that $ecc(u_i)=d(u_i,u_k)$, $d(u_k,u_i) < rad(D_2)$.  Then for all $(u_i,v_s) \in V(D_1 \boxtimes D_2)$, $d((u_i,v_s),(u_k,v_r))=\max\{d(u_i,u_k),(v_s,v_r)\}=d(v_s,v_r)<ecc(v_s) \leq ecc(u_i,v_s)$.  This contradicts our assumption that $(u_k,v_r) \in Ecc(D_1 \boxtimes D_2)$.  Hence $u_k \in A$.
		So in this case, we get $(u_i,v_r) \in \bigcup_{u_i \geq r_ {D_2}}Ecc(u_i) \times V(D_2)$.  Hence $Ecc(D_1 \boxtimes D_2) \subseteq [\bigcup_{u_i \geq r_ {D_2}}Ecc(u_i) \times V(D_2)] \bigcup [V(D_1) \times Ecc(D_2)]$ and so the result holds.
	\end{enumerate}
\end{proof}
\begin{proposition}
	Let $D_1$ and $D_2$ be two strongly connected digraphs.  Then $Ct(D_1 \boxtimes D_2)=\{(u_i,v_r) \in V(D_1 \boxtimes D_2): u_i \in Ct(D_1), ecc(v_r)<ecc(u_i)\} \bigcup \{(u_i,v_r) \in V(D_1 \boxtimes D_2): v_r \in Ct(D_2), ecc(u_i)<ecc(v_r)\} \bigcup [Ct(D_1) \times Ct(D_2)]$.
\end{proposition}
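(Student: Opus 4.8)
The plan is to reduce membership in $Ct(D_1\boxtimes D_2)$ entirely to eccentricity comparisons in the two factors, using the identities already established above: $ecc(u_i,v_r)=\max\{ecc(u_i),ecc(v_r)\}$ and $N[(u_i,v_r)]=N[u_i]\times N[v_r]$. Writing $M=\max\{ecc(u_i),ecc(v_r)\}$, the vertex $(u_i,v_r)$ is a contour vertex exactly when $\max\{ecc(u_k),ecc(v_q)\}\le M$ for every $u_k\in N[u_i]$ and every $v_q\in N[v_r]$. Since $u_k$ and $v_q$ may be chosen independently, this single requirement is equivalent to the conjunction of $M_1\le M$ and $M_2\le M$, where $M_1=\max_{u_k\in N[u_i]}ecc(u_k)$ and $M_2=\max_{v_q\in N[v_r]}ecc(v_q)$. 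Because $u_i\in N[u_i]$ and $v_r\in N[v_r]$ we always have $M_1\ge ecc(u_i)$ and $M_2\ge ecc(v_r)$, so the contour condition is precisely $M_1\le M$ together with $M_2\le M$; note also that $M_1=ecc(u_i)$ is the statement $u_i\in Ct(D_1)$ and $M_2=ecc(v_r)$ the statement $v_r\in Ct(D_2)$.

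With this reformulation I would split on the comparison of $ecc(u_i)$ and $ecc(v_r)$. When $ecc(u_i)=ecc(v_r)=M$, the two inequalities force $M_1=ecc(u_i)$ and $M_2=ecc(v_r)$, i.e. $u_i\in Ct(D_1)$ and $v_r\in Ct(D_2)$; this is exactly the term $Ct(D_1)\times Ct(D_2)$. When $ecc(u_i)>ecc(v_r)$ we have $M=ecc(u_i)$, so $M_1\le ecc(u_i)$ forces $u_i\in Ct(D_1)$, matching the first set, and the remaining inequality $M_2\le ecc(u_i)$ must be shown to be automatic; the case $ecc(u_i)<ecc(v_r)$ is symmetric (the strong product being commutative) and yields the second set. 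Thus both inclusions reduce to a single residual claim: if $ecc(v_r)<ecc(u_i)$ then every neighbour $v_q$ of $v_r$ satisfies $ecc(v_q)\le ecc(u_i)$, and symmetrically with the roles of the factors exchanged.

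The main obstacle is exactly this residual claim, and it is where the directed setting departs from the undirected one. Since eccentricities are integers and $ecc(v_r)\le ecc(u_i)-1$, the claim would follow from the classical fact that adjacent vertices have eccentricities differing by at most one, i.e. $ecc(v_q)\le ecc(v_r)+1$ for $v_q\in N(v_r)$. In an undirected graph this is immediate from the triangle inequality, because adjacent vertices are at distance one. Under the maximum-distance metric it is delicate: an arc from $v_r$ to $v_q$ only guarantees $\overrightarrow{d}(v_r,v_q)=1$, while $\overrightarrow{d}(v_q,v_r)$ and hence $d(v_r,v_q)=\max\{\overrightarrow{d}(v_r,v_q),\overrightarrow{d}(v_q,v_r)\}$ may be large, so the bound $ecc(v_q)\le ecc(v_r)+d(v_r,v_q)$ no longer yields a difference of one. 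I would therefore concentrate the proof on establishing that in a strong digraph under the maximum-distance metric the eccentricities of adjacent vertices differ by at most one; everything else is the routine bookkeeping of the case analysis above. This step deserves the most scrutiny, since if it fails one can have a neighbour $v_q$ of $v_r$ with $ecc(v_q)>ecc(u_i)\ge ecc(v_r)+1$, which would place $(u_i,v_r)$ in the first set yet exclude it from the contour, so the whole characterization hinges on controlling the eccentricities of adjacent vertices.
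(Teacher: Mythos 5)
Your reduction of the contour condition to the two inequalities $M_1\le M$ and $M_2\le M$, and the ensuing case split on how $ecc(u_i)$ compares with $ecc(v_r)$, is exactly the paper's argument (the paper phrases it as a three-case ``if and only if''). The genuine gap is the residual claim you isolate, and your plan to close it cannot work: under the maximum distance, adjacent vertices of a strong digraph can have eccentricities differing by more than one, so the lemma you intend to prove is false. Concretely, let $D_2$ have vertex set $\{a,b,c,p,w\}$ and arcs $a\to b$, $b\to c$, $c\to a$, $a\to p$, $p\to a$, $p\to w$, $w\to a$. This digraph is strong, and one computes $ecc(a)=2$ (every vertex is within $md$-distance $2$ of $a$), while the neighbour $b$ of $a$ has $ecc(b)=4$, because the only directed path from $b$ to $w$ is $b\to c\to a\to p\to w$, of length $4$. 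So eccentricity can jump by $2$ across an arc; the only available general bound is $ecc(v_q)\le ecc(v_r)+d(v_r,v_q)$, and $d(v_r,v_q)$ need not be $1$ for adjacent vertices, exactly as you feared.

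Moreover, this is not merely a hole in your outline: the same example shows the proposition itself is false as stated, and the paper's own proof commits the step you flagged tacitly, in the ``if'' direction of its final equivalence (its cases 2 and 3 do not imply the contour condition). Take $D_1$ to be the bidirected path $u_1\leftrightarrow u_2\leftrightarrow u_3\leftrightarrow u_4$, so that $u_1\in Ct(D_1)$ and $ecc(u_1)=3$. Then $(u_1,a)$ belongs to the first set on the right-hand side, since $u_1\in Ct(D_1)$ and $ecc(a)=2<3=ecc(u_1)$; yet $(u_1,b)\in N[(u_1,a)]$ has $ecc_{D_1\boxtimes D_2}(u_1,b)=\max\{3,4\}=4>3=ecc_{D_1\boxtimes D_2}(u_1,a)$, so $(u_1,a)\notin Ct(D_1\boxtimes D_2)$. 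The repair is to strengthen the hypotheses of the first two sets in the same way the paper's Theorem 1 guards its sets $A_2$ and $A_3$ for the boundary: replace ``$ecc(v_r)<ecc(u_i)$'' by ``$ecc(v_q)\le ecc(u_i)$ for all $v_q\in N[v_r]$'' (and symmetrically in the second set). With these guards, both inclusions go through by exactly the bookkeeping you describe, and no lemma about eccentricities of adjacent vertices is needed at all.
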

\begin{proof}
	
	$(u_i,v_r) \in Ct(D_1 \boxtimes D_2)$ if and only if $ecc(u_i,v_r) \geq ecc(u_k,v_q)$ for all $(u_k,v_q) \in N[(u_i,v_r)]$;  if and only if $\max\{ecc(u_i),ecc(v_r)\} \geq \max\{ecc(u_k),ecc(v_q)\}$ for all $u_k \in N[u_i]$ and  $v_q \in N[v_r]$;  if and only if one of the following three cases holds.
		\begin{enumerate}
		\item 
		$ecc(u_i) \geq ecc(u_k)$ and $ecc(v_r) \geq ecc(v_q)$ for all $u_k \in N[u_i]$ and  $v_q \in N[v_r]$.
		\item 
		$ecc(v_r) < ecc(u_i)$ and $ecc(u_i) \geq ecc(u_k)$ for all $u_k \in N[u_i]$.
		\item 
		$ecc(u_i) < ecc(v_r)$ and $ecc(v_r) \geq ecc(v_q)$ for all $v_q \in N[v_r]$.
	\end{enumerate}
	Thus we get $Ct(D_1 \boxtimes D_2)=\{(u_i,v_r) \in V(D_1 \boxtimes D_2): u_i \in Ct(D_1), ecc(v_r)<ecc(u_i)\} \bigcup \{(u_i,v_r) \in V(D_1 \boxtimes D_2): v_r \in Ct(D_2), ecc(u_i)<ecc(v_r)\} \bigcup [Ct(D_1) \times Ct(D_2)]$.
\end{proof}

\section{Conclusion}

In  the study of large networks which can be represented by strongly connected  digraphs, the determination of boundary-type sets has important applications.  The boundary-type sets of  almost all large directed networks can be determined by combining the unique prime factor decomposition and the results obtained relating to boundary-type sets of strong product of digraphs and that of its factors.    This information can be used to determine the efficiency of the network in physical, biological and social set ups.
\section*{Acknowledgements}
Prasanth G. Narasimha-Shenoi and Mary Shalet T. J are supported by Science and Engineering Research Board, a statutory board of Government of India under their Extra Mural Research Funding No. EMR/2015/002183. Also, their research was partially supported by Kerala State Council for Science Technology and Environment of Government of Kerala under their SARD project grant Council(P) No. 436/2014/KSCSTE.  Prasanth G. Narasimha-Shenoi is also supported by  Science and Engineering Research Board, a statutory board of Government of India under their MATRICS Scheme No. MTR/2018/000012.
\bibliographystyle{splncs04}
\bibliography{caldam2020}
\end{document}